\newtheorem{theorem}{Theorem}
\newtheorem{lemma}{Lemma}
\newtheorem{corollary}{Corollary}
\newtheorem{remark}{Remark}
\newtheorem{proposition}{Proposition}
\title{A DPI‑PAC‑Bayesian Framework for Generalization Bounds}
\author{
%----------------------------------------
\IEEEauthorblockN{Muhan Guan\IEEEauthorrefmark{1},
                  Farhad Farokhi\IEEEauthorrefmark{1},
                  Jingge Zhu\IEEEauthorrefmark{1}}
\IEEEauthorblockA{\IEEEauthorrefmark{1}%
Department of EEE, University of Melbourne, Parkville, Victoria, Australia\\
Email: muhang@student.unimelb.edu.au,\;
       \{farhad.farokhi, jingge.zhu\}@unimelb.edu.au}
%----------------------------------------
}
\begin{document}

\maketitle
\thispagestyle{empty}

\begin{abstract}
  % Abstract: Provide a concise summary (around 200 words) of your problem, method, results, and contributions.
  We develop a unified Data Processing Inequality PAC-Bayesian framework—abbreviated DPI-PAC-Bayesian—for deriving the generalization error bounds in the supervised learning setting. By embedding the Data Processing Inequality (DPI) into the change‑of‑measure technique, we obtain explicit bounds on the binary Kullback–Leibler generalization gap for both Rényi divergence and any $f$-divergence measured between a data‑independent prior distribution and an algorithm‑dependent posterior distribution. We present three bounds derived under our framework using Rényi, Hellinger \(p\) and Chi-Squared divergences. Additionally, our framework also demonstrates a close connection with other well-known bounds. When the prior distribution is chosen to be uniform, our bounds recover to the classical Occam’s Razor bound and, crucially, eliminate the extraneous \(\log(2\sqrt{n})/n\) slack present in the PAC‑Bayes bound, thereby achieving tighter bounds. The framework thus bridges data‑processing and PAC‑Bayesian perspectives, providing a flexible, information‑theoretic tool to construct generalization guarantees.\end{abstract}

%%%%%%%%%%%%%%%%%%%%%%%%%%%%%%%%%%%%%%%%%%%%%%%%%%%%%%%%%%%%%%%%%%%%%%%%%%%%%%
% Keywords
%%%%%%%%%%%%%%%%%%%%%%%%%%%%%%%%%%%%%%%%%%%%%%%%%%%%%%%%%%%%%%%%%%%%%%%%%%%%%%
\begin{IEEEkeywords}
  DPI‑PAC‑Bayesian framework, generalization bounds, Data Processing Inequality, PAC-Bayes bound
\end{IEEEkeywords}

%%%%%%%%%%%%%%%%%%%%%%%%%%%%%%%%%%%%%%%%%%%%%%%%%%%%%%%%%%%%%%%%%%%%%%%%%%%%%%
% Main Content (Main paper: up to 5 pages)
%%%%%%%%%%%%%%%%%%%%%%%%%%%%%%%%%%%%%%%%%%%%%%%%%%%%%%%%%%%%%%%%%%%%%%%%%%%%%%

\section{Introduction}
\label{intro}
% Introduction (≈1 page, about 40 lines)
% Provide background, motivation, and a brief summary of contributions.
Bounding techniques under supervised learning setting can provide theoretical guarantees for the performance of machine learning models on unseen data, improving the generalization capabilities of the models. 

This work focuses on high-probability generalization bounds. A typical result states that, with probability at least $1-\delta$ over a set of \textit{i.i.d.} samples, the population risk of a model is upper-bounded by
$f\bigl(\delta,\;\text{empirical risk on the samples}\bigr)$. By contrast, information-theoretic bounds usually bound the expected gap between population and empirical risks. A thorough comparison of these two families of bounds is provided in \cite{MAL-112}.

As the work of Langford \cite{JMLR:v6:langford05a} mentioned, the high-probability generalization bounds for supervised learning setting can be classified into two classes: test-set bounds and train-set bounds. 
Both of these bounds have their advantages and disadvantages.
Although test-set bounds can give a tight upper bound on the error rate on unseen data, the main problem of such bounds is that the data used to evaluate the bounds cannot be used for learning. Specifically, we have to remove some training examples and keep them as a holdout set, which could lead to loss of performance on our learned hypothesis when training examples are inadequate. %Additionally, a tight test-set bound only informs us that the behavior of the holdout set is well understood on our learning algorithms, but there may be very little relationship between the behavior of the holdout set accuracy and the true error in the small data regime.

Compared to test-set bounds, train-set bounds are the current focus of learning theory work.
%, many information-theoretic bounds are train-set bounds (see, e.g., \cite{9965761,XuRaginsky2017,BassilyEtAl2018,BuZouVeeravalli2020,SteinkeZakynthinou2020}).
The biggest advantage of train-set bounds is that we can use entire data samples to perform both learning and bound construction, but many train-set bounds are generally loose. Therefore, it is crucial to develop techniques to improve the tightness of train-set bounds, so that these bounds can provide better insight into the learning problem itself.
\subsection{Our contribution}
In this work, we propose a flexible DPI-PAC-Bayesian framework for deriving train-set generalization error bounds under the supervised learning setting by combining Data Processing Inequality (DPI) with the spirit of the PAC-Bayesian perspective. This framework accommodates Rényi divergence and also arbitrary $f$-divergence measures.

In addition to its flexibility, the framework shows a close connection to other widely used train-set bounds and also yields provably tight bounds. Our theoretical results demonstrate that, in some special cases, the bounds derived by our framework can recover to the Occam's Razor bound and also can be explicitly tighter than the PAC-Bayes bound.
%Test set bound
%Train set bound 

\subsection{Problem setting}
Consider a standard supervised learning setting. We have $n$ \textit{i.i.d.} training samples $S=\{Z_1,...Z_n
\}$, which are randomly drawn from an underlying data-generating distribution $\mathcal{D}$. A hypothesis space $\mathcal{W}$ that includes a set of hypotheses (or classifiers) $w$. The learning algorithm is treated as a conditional probability distribution \( P_{W|S} \). For a given training set \( s \), the algorithm samples a hypothesis \( w \) according to \( P_{W|S=s} \). Coupled with a marginal distribution \( P_S \) over training samples, this defines a joint distribution over hypothesis space and data, given by \( P_{S,W} = P_S P_{W|S} \) on \( \mathcal{W} \times \mathcal{Z}^n \).  The performance of a hypothesis $w\in\mathcal{W}$ on a training sample is measured by a loss function $\ell:\mathcal{W}\times\mathcal{Z}\rightarrow[0,1]$. The empirical loss of $w$ is defined as $\hat{L}(S,w)=\frac{1}{n}\sum_{i=1}^n\ell(Z_i,w)$, while the population loss of $w$ on $\mathcal{D}$ is $L(w)=\mathbb{E}_{Z\sim\mathcal{D}}\{\ell(Z,w)\}$. For any $w\in\mathcal{W}$, we consider the bounds on the generalization gap $L(w)-\hat{L}(S,w)$.
For ease of exposition, throughout this work we consider on the \emph{finite‑hypothesis} case $|\mathcal{W}|<\infty$. The prior distribution $Q_W$ assigns a strictly positive mass to every $w\in\mathcal{W}$ and $\min_wQ_W(w)>0$, but the same technique can be extended to more general case when the hypothesis space $\mathcal{W}$ is infinite.

Consider a fixed kernel $W(y|x)$ and two different probability distributions $P_X$ and $Q_X$ defined on the same space $\mathcal{X}$. Then, define $P_Y(y) = \sum_x W(y|x) P_X(x)$ \text{and} $Q_Y(y) = \sum_x W(y|x) Q_X(x)$. 
Moreover, for a convex function \(f:(0,+\infty) \to \mathbb{R}\) satisfying \(f(1) = 0\), the \(f\)-divergence between two distributions on a probability space $\mathcal{X}$ is defined as
\[
D_f(P_X\|Q_X) = \sum_{x \in \mathcal{X}} Q_X(x)\, f\!\left(\frac{P_X(x)}{Q_X(x)}\right).
\]

We introduce the expressions of Rényi divergence and two crucial $f$-divergences for our bounds derivation:
1. Rényi divergence ($\alpha>0$, $\alpha\neq1$)
\[D_\alpha(P\|Q)=\frac{1}{\alpha-1}\ln(\sum_{x}P_X(x)^{\alpha}Q_X(x)^{1-\alpha}).
\]
2. Pearson $\chi^2$-divergence
\[\chi^2(P||Q)=\sum\limits_x\frac{P_X(x)^2}{Q_X(x)}-1.\]
3. Hellinger $p$-divergence ($p>0$, $p\neq1$)
\[\mathcal{H}^p(P||Q)=\frac{\sum\limits_xP_X(x)^pQ_X(x)^{1-p}-1}{p-1}.\]

\begin{proposition}[Data Processing Inequality \cite{Gastpar22}]
\label{prop:DPI}
With the distributions \(P_X,Q_X,P_Y,Q_Y\) and the kernel $W(y|x)$ defined previously, we have
\begin{align*}
&\text{(i) } D_f(P_Y\|Q_Y)\le D_f(P_X\|Q_X),\\
&\text{(ii) } D_\alpha(P_Y\|Q_Y)\le D_\alpha(P_X\|Q_X).
\end{align*}
\end{proposition}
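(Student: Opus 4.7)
The plan is to prove (i) directly from Jensen's inequality applied to the convex function $f$, and then deduce (ii) by expressing Rényi divergence as a monotone transformation of an $f$-divergence (the Hellinger integral) so that it inherits the data-processing property from (i).

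For (i), I would start by rewriting the likelihood ratio of the outputs in terms of the inputs. The key observation is that the quantity $Q_{X\mid Y}(x\mid y) := W(y\mid x)Q_X(x)/Q_Y(y)$ is a valid probability distribution over $x$ for each fixed $y$ (Bayes' rule for the reference distribution). Then
\[
\frac{P_Y(y)}{Q_Y(y)} \;=\; \sum_x \frac{W(y\mid x)Q_X(x)}{Q_Y(y)}\cdot\frac{P_X(x)}{Q_X(x)} \;=\; \sum_x Q_{X\mid Y}(x\mid y)\,\frac{P_X(x)}{Q_X(x)}.
\]
Applying Jensen's inequality to the convex function $f$ with respect to the distribution $Q_{X\mid Y}(\cdot\mid y)$ gives an upper bound on $f(P_Y(y)/Q_Y(y))$. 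Summing over $y$ weighted by $Q_Y(y)$ and exchanging the order of summation, the factor $Q_Y(y)Q_{X\mid Y}(x\mid y)=W(y\mid x)Q_X(x)$ appears, and $\sum_y W(y\mid x)=1$ collapses the $y$-sum, leaving exactly $\sum_x Q_X(x)f(P_X(x)/Q_X(x))=D_f(P_X\|Q_X)$.

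For (ii), I would treat the two regimes $\alpha>1$ and $0<\alpha<1$ together by noting that $f_\alpha(t)=t^\alpha$ is convex on $(0,\infty)$ when $\alpha>1$, while $-t^\alpha$ is convex for $0<\alpha<1$. In either case the \emph{Hellinger integral} $H_\alpha(P\|Q):=\sum_x P_X(x)^\alpha Q_X(x)^{1-\alpha}$ can be written (up to an additive constant and a sign depending on the regime) as an $f$-divergence, and so by part (i) satisfies $H_\alpha(P_Y\|Q_Y)\le H_\alpha(P_X\|Q_X)$ when $\alpha>1$ and the reverse inequality when $0<\alpha<1$. Since $D_\alpha(P\|Q)=\frac{1}{\alpha-1}\ln H_\alpha(P\|Q)$ and the prefactor $1/(\alpha-1)$ flips sign exactly across $\alpha=1$, the monotonicity of $\ln$ yields $D_\alpha(P_Y\|Q_Y)\le D_\alpha(P_X\|Q_X)$ in both regimes.

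The main obstacle is really just keeping the signs and the direction of the inequality straight in the $0<\alpha<1$ case, since both the convexity of the relevant $f$ and the prefactor in the Rényi definition change sign. Aside from this bookkeeping, the argument is a standard Jensen computation and I would not expect any technical difficulty; the finite-support assumption on $\mathcal{X}$ already present in the paper's setting lets me interchange the finite sums freely without measure-theoretic caveats.
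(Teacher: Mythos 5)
Your proposal is correct. Note that the paper itself does not prove this proposition at all: it is imported as a known result with a citation to Gastpar et al., so there is no in-paper argument to compare against. What you give is the standard proof — part (i) by Jensen's inequality applied through the backward channel $Q_{X\mid Y}(x\mid y)=W(y\mid x)Q_X(x)/Q_Y(y)$, and part (ii) by writing the Hellinger integral $H_\alpha(P\|Q)=\sum_x P_X(x)^\alpha Q_X(x)^{1-\alpha}$ as an $f$-divergence (with $f(t)=t^\alpha-1$ for $\alpha>1$ and $f(t)=1-t^\alpha$ for $0<\alpha<1$) and noting that the sign flip of $1/(\alpha-1)$ exactly compensates the reversed inequality in the latter regime — and your sign bookkeeping is right in both cases. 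The only minor points worth making explicit in a fully rigorous write-up are the usual conventions at zero masses: terms with $Q_Y(y)=0$ contribute nothing and $Q_{X\mid Y}(\cdot\mid y)$ need not be defined there, and for $\alpha>1$ one either assumes $P_X\ll Q_X$ or adopts the convention $H_\alpha=\infty$ (in the paper's application $Q_W$ has full support, so this is immaterial). With those conventions stated, your argument is a complete and correct proof of the cited proposition.
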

That is, passing $P_X$ and $Q_X$ through the same kernel will make them ``more similar''. 

\section{Related work}
\label{RelatedWork}
% Preliminaries & Related Work (≈0.5 page, about 20 lines)
% Introduce basic definitions (e.g., PAC-Bayes, data processing inequality) and review related literature.
%分为两类 
To measure the discrepancy between empirical and expected losses, we employ the Kullback-Leibler (KL) function, $\mathrm{KL}(\hat{L}(S,W)||L(W))$. 
For $p, q\in[0,1]$, the KL function is defined as
\[
\mathrm{KL}(p||q)=p\ln\frac{p}{q}+(1-p)\ln\frac{1-p}{1-q}.
\]
The KL-loss form bounds can be further relaxed using Pinsker's inequality $\mathrm{KL}(p||q)\geq2(p-q)^2$ and then yield the bounds of the classical difference-loss form. 

In the supervised setting, generalization bounds fall into two categories: test‑set bounds, which require that an extra subset of the data be held out solely for evaluation, because these examples cannot be used during training, and train-set bounds, which use the entire dataset both to learn the hypothesis and to compute the bound.

\subsection{Test-set bound}
To illustrate how a test-set bound is evaluated, consider the following two-party scenario \cite{JMLR:v6:langford05a}: (1) A learner trains a hypothesis $w$ on a data set that the verifier will never see, and then transmits this fixed $w$ to the Verifier. (2) A verifier samples a set of data $S$ , using $w$ together with the empirical loss on $S$, computes the right-hand side of the bound.

In test-set bound, $S$ is generated after $w$ is fixed, and is independent of the learner’s training data.

\begin{theorem}
(KL test-set bound \cite{JMLR:v6:langford05a}).

For any fixed $w$, with probability at least $1-\delta$ over $P_S$, it holds that
\begin{equation}
 \mathrm{KL}( \hat{L}(S, w)||L(w))  \le \frac{\log \frac{1}{\delta}}{n} .
\label{eq:KL-test-bound}
\end{equation}
\end{theorem}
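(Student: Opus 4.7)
The plan is to reduce the statement to a standard one-dimensional Chernoff-type concentration bound, exploiting the crucial fact that under the test-set protocol the hypothesis $w$ is fixed before the draw of $S$. Under that independence, the per-sample losses $X_i := \ell(Z_i,w)$ are i.i.d.\ $[0,1]$-valued random variables with common mean $L(w)$, and $\hat{L}(S,w)=\frac{1}{n}\sum_{i=1}^n X_i$ is their empirical average. The quantifier ``$\forall w$'' in the statement is then read as ``for the (arbitrary but pre-committed) $w$ supplied by the learner,'' not as a uniform bound over the whole space $\mathcal{W}$.

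The backbone I would use is the Cram\'er--Chernoff exponential inequality in its sharp binary-KL form. First I would invoke the Hoeffding-style MGF comparison: for any $X\in[0,1]$ with mean $p$ and any $\lambda\in\mathbb{R}$,
\[
\mathbb{E}[e^{\lambda X}] \le (1-p) + p\,e^{\lambda},
\]
which follows from convexity of $x\mapsto e^{\lambda x}$ on $[0,1]$. Tensorizing over the $n$ independent samples bounds the MGF of $n\hat{L}(S,w)$ by that of a $\mathrm{Bin}(n,L(w))$ random variable, so the standard Markov-then-optimize-$\lambda$ argument produces the tail inequality
\[
P_S\bigl(\hat{L}(S,w) \ge q\bigr) \le e^{-n\,\mathrm{KL}(q\,||\,L(w))}, \qquad q>L(w),
\]
together with its mirror-image statement on the lower tail.

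To convert the tail inequality into the KL form, I would invert it: setting the exponent equal to $\log(1/\delta)$ and writing the event $\{\mathrm{KL}(\hat{L}(S,w)\,||\,L(w)) \ge \log(1/\delta)/n\}$ as the union of its upper- and lower-tail pieces, each of the two Chernoff bounds controls one piece. Complementing yields the stated high-probability conclusion $\mathrm{KL}(\hat{L}(S,w)\,||\,L(w)) \le \log(1/\delta)/n$.

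The step I would be most careful about is the sidedness bookkeeping, which is the only place a spurious constant could creep in. Naively combining both tails produces $2\,e^{-n\varepsilon}$ and hence $\log(2/\delta)/n$; the clean $\log(1/\delta)/n$ constant of the theorem is the one-sided version, which is the relevant direction for generalization (the ``dangerous'' event $L(w)>\hat{L}(S,w)$) and which is the form originally proved by Langford. I would therefore either restrict attention to that one-sided event or absorb the constant via the inverted-binomial-tail formulation, making explicit in the write-up which convention is being used.
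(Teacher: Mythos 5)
Your proposal is correct and is essentially the argument behind the cited result: the paper itself gives no proof of this theorem, deferring to Langford, and that derivation is exactly your route --- the convexity/MGF comparison with a Bernoulli of mean $L(w)$, the resulting binomial-KL tail bound $P_S\bigl(\hat{L}(S,w)\ge q\bigr)\le e^{-n\,\mathrm{KL}(q\|L(w))}$ (a relaxation of the exact binomial-confidence-interval computation the paper alludes to via Clopper--Pearson), and inversion at level $\delta$. One refinement: commit to the one-sided branch of your ``either/or,'' because with the symmetric KL the $\log(1/\delta)/n$ statement is genuinely false in corner cases (e.g.\ $n=1$, $0$--$1$ loss, $L(w)=1/2$, $\delta>1/2$, where $\mathrm{KL}(\hat L(S,w)\|L(w))=\log 2$ almost surely), so no inverted-binomial-tail bookkeeping can absorb the factor of two; the theorem must be read with Langford's one-sided $\mathrm{KL}^{+}$ convention (or with $\log(2/\delta)$ for the two-sided event). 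Your reading of ``$\forall w$'' as holding for each pre-committed $w$ separately is also the right one, and it is precisely how the paper later invokes the result through $\mathbb{P}\{E_w\}\le\delta$ for each fixed $w$.
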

The bound is very simple and can be seen as computing a confidence interval for the binomial distribution as in \cite{clopper1934fiducial}.
\subsection{Train-set bound}
In a train-set bound, the same set of $S$ is used twice—first to train the hypothesis and then to evaluate the bound. The evaluation protocol is as follows:  
(1) A learner chooses a prior $Q_W(w)$ over the hypothesis space before seeing $S$ and sends it to the verifier. (2) The verifier samples the data $S$ and sends it to the learner. (3) The learner chooses $w$ based on $S$ and sends it to the verifier. (4) The verifier evaluates the bound.

The first and tightest train-set bound is the Occam’s Razor bound \cite{blumer1987occam}, and Langford \cite{JMLR:v6:langford05a} has proved that the bound cannot be improved without incorporating extra information.
\begin{theorem}
\label{or_bound_theorem}
(Occam's Razor bound \cite{blumer1987occam}).
\textit{Assume} \( w \in \mathcal{W} \) \textit{with} \( \mathcal{W} \) \textit{a countable set. Let} \( Q_{W}(w) \) \textit{be a distribution over} \( \mathcal{W} \). \textit{For any} $\delta\in(0,1]$ \textit{, with probability at least} $1-\delta$ \textit{over} $P_S$ \textit{, it holds that}
\begin{flalign}
\forall w,\,
  \mathrm{KL}(\hat{L}(S,w)||L(w))
  \le \frac{\log\frac1{Q_{ W}(w)}+\log\frac1\delta}{n}.
\label{or_bound}
\end{flalign}
\end{theorem}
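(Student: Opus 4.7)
The plan is to derive Theorem~\ref{or_bound_theorem} from the KL test-set bound \eqref{eq:KL-test-bound} by means of a weighted union bound over the (countable) hypothesis space, with the weights chosen so that the total failure probability telescopes to $\delta$. The key conceptual point is that although the test-set bound was stated for a single hypothesis fixed before drawing $S$, the probability appearing in it is taken purely over $P_S$; hence for \emph{every} fixed $w\in\mathcal{W}$ (treated one at a time) the bound is a statement about the marginal distribution of $\hat L(S,w)$, and we may freely intersect such statements across all $w$.

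Concretely, I would first rewrite \eqref{eq:KL-test-bound} in tail form: for any fixed $w$ and any $t\ge 0$,
\[
P_S\bigl\{\mathrm{KL}(\hat L(S,w)\|L(w))>t\bigr\}\le e^{-nt},
\]
which is the standard Chernoff/Sanov statement underlying the test-set bound. Next, for each $w\in\mathcal{W}$ I would allocate a per-hypothesis confidence budget $\delta_w:=\delta\,Q_W(w)$ and define the bad event
\[
B_w:=\Bigl\{\mathrm{KL}(\hat L(S,w)\|L(w))>\tfrac{1}{n}\log\tfrac{1}{\delta_w}\Bigr\},
\]
so that $P_S(B_w)\le \delta_w$ by the tail form above.

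A union bound then yields
\[
P_S\!\left(\bigcup_{w\in\mathcal{W}} B_w\right)\le \sum_{w\in\mathcal{W}}\delta\,Q_W(w)=\delta,
\]
since $Q_W$ is a probability distribution on $\mathcal{W}$. On the complementary event, which has probability at least $1-\delta$, we obtain simultaneously for every $w$
\[
\mathrm{KL}(\hat L(S,w)\|L(w))\le\frac{\log\frac{1}{\delta_w}}{n}=\frac{\log\frac{1}{Q_W(w)}+\log\frac{1}{\delta}}{n},
\]
which is exactly \eqref{or_bound}.

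The only step that requires care is the union bound: one must verify that $P_S(\bigcup_w B_w)\le\sum_w P_S(B_w)$ holds for a countable $\mathcal{W}$ (immediate by countable sub-additivity) and that the weights $\delta_w=\delta Q_W(w)$ are admissible even when $Q_W(w)$ is very small (they are, since $\min_w Q_W(w)>0$ in the finite case assumed in the paper, and more generally one only needs $Q_W$ to be a proper prior summing to $1$). I do not anticipate a genuine obstacle here; the entire argument is essentially the observation that allocating the confidence parameter proportionally to the prior turns a per-hypothesis tail bound into a uniform-in-$w$ statement at the cost of an additive $\log(1/Q_W(w))/n$ complexity term.
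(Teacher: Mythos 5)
Your proof is correct, and it is essentially the classical argument for the Occam's Razor bound: restate the per-hypothesis KL test-set bound \eqref{eq:KL-test-bound} as an exponential tail bound, allocate a confidence budget $\delta_w=\delta\,Q_W(w)$ to each hypothesis, and apply countable sub-additivity so the failure probabilities sum to $\delta$. The paper does not prove Theorem~\ref{or_bound_theorem} itself (it is quoted from the literature); instead, it reaches this bound indirectly through the DPI-PAC-Bayesian machinery, namely the change-of-measure lemmas combined with the event $E$ of Theorem~\ref{PAC-Bayes_with_D_alpha}, and then the limits $\alpha\to\infty$ or $p\to\infty$ in Corollary~\ref{cor:optimal}. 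The two routes differ in what they deliver: your weighted union bound is elementary and yields the per-hypothesis complexity term $\log(1/Q_W(w))$ for an \emph{arbitrary} prior on a countable class, exactly as stated in \eqref{or_bound}; the paper's DPI route produces bounds phrased in terms of $Q_{\min}$ and therefore recovers the OR form only when the prior is uniform (so that $Q_{\min}=Q_W(w)$), but in exchange it gives a unified template covering R\'enyi, Hellinger and $\chi^2$ divergences and makes the comparison with the PAC-Bayes $\log 2\sqrt{n}$ slack transparent. One point to keep explicit, which you do note: the test-set bound must be read as a statement for a single $w$ fixed before $S$ is drawn (probability over $P_S$ only), and the choice $\delta_w=\delta Q_W(w)$ remains admissible for arbitrarily small or zero prior mass since the corresponding bound simply becomes vacuous; with that, the argument is complete.
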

PAC-Bayes bounds \cite{mcallester1998some} are also train-set bounds. We present the PAC-Bayes-KL bound from the work of McAllester~\cite{10.1007/978-3-540-45167-9_16}, which is one of the tightest known PAC-Bayes bounds in the literature and can be relaxed in various ways to obtain other PAC-Bayes Bounds~\cite{NIPS2013_a97da629}.
\begin{theorem}
    (PAC-Bayes bound \cite{10.1145/307400.307435}). \textit{Let} \( P_{W|S} \) \textit{be a fixed conditional distribution (given data $S$}) \textit{on} \( \mathcal{W} \). \textit{Define} $P[L] := \mathbb{E}_{P_{W|S}} \left\{ L(W) \right\}, \ P[\hat{L}] := \mathbb{E}_{P_{W|S}} \left\{ \hat{L}(S,W) \right\}. $ \textit{For any} $\delta\in(0,1]$ \textit{, with probability at least} $1-\delta$ \textit{over} $P_S$ \textit{, it holds that}
\begin{align}
 \mathrm{KL}(P[\hat{L}]||P[L]) \leq \frac{D(P_{W|S} \,\|\, Q_ W) + \log \frac{2\sqrt{n}}{\delta}}{n},
\label{eq:OR}
\end{align}

\label{theorem2}
\end{theorem}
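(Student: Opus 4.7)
The plan is to follow the classical Donsker--Varadhan change-of-measure route combined with Maurer's moment bound. The starting point is the variational representation: for any measurable $\phi(w,S)$ and any posterior $P_{W|S}$ absolutely continuous with respect to the prior $Q_W$,
\begin{equation*}
\mathbb{E}_{W\sim P_{W|S}}[\phi(W,S)] \le D(P_{W|S}\|Q_W) + \log \mathbb{E}_{W\sim Q_W}\!\bigl[e^{\phi(W,S)}\bigr].
\end{equation*}
I would instantiate this with $\phi(w,S) = n\,\mathrm{KL}(\hat{L}(S,w)\|L(w))$, which converts the variational inequality into an upper bound on $n\,\mathbb{E}_{P_{W|S}}[\mathrm{KL}(\hat L(S,W)\|L(W))]$ in terms of the KL divergence between posterior and prior and the log of a moment generating function under the prior.

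Next, I would pull the expectation outside the KL functional using convexity of $(p,q)\mapsto \mathrm{KL}(p\|q)$ and Jensen's inequality, giving
\begin{equation*}
\mathrm{KL}(P[\hat L]\,\|\,P[L]) \le \mathbb{E}_{P_{W|S}}\!\bigl[\mathrm{KL}(\hat L(S,W)\|L(W))\bigr],
\end{equation*}
which is precisely the quantity the change-of-measure step controls. After this step, the only remaining object is the prior-side moment $\mathbb{E}_{Q_W}[e^{n\,\mathrm{KL}(\hat L(S,W)\|L(W))}]$, which still depends on $S$ but no longer on the posterior.

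To turn the result into a high-probability statement over $P_S$, I would take expectation of the exponential moment over $S$ (swapping order using independence of $Q_W$ and $S$) and invoke Maurer's lemma: for each fixed $w$, $n\hat L(S,w)$ is a sum of i.i.d.\ $[0,1]$ losses, and the binomial-type moment bound $\mathbb{E}_S[e^{n\,\mathrm{KL}(\hat L(S,w)\|L(w))}]\le 2\sqrt n$ holds uniformly in $w$. Combined with Markov's inequality applied at level $1/\delta$ to the nonnegative random variable $\mathbb{E}_{Q_W}[e^{n\,\mathrm{KL}(\hat L(S,W)\|L(W))}]$, this yields, with probability at least $1-\delta$ over $P_S$, the bound $\log \mathbb{E}_{Q_W}[e^{n\,\mathrm{KL}}]\le \log(2\sqrt n/\delta)$, and rearranging produces exactly the claimed inequality.

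The main obstacle I expect is the moment bound $\mathbb{E}_S[e^{n\,\mathrm{KL}(\hat L(S,w)\|L(w))}]\le 2\sqrt n$, i.e.\ Maurer's lemma. Its proof is not a one-liner: it requires a careful analysis of the binomial distribution, using the exact form of $\mathrm{KL}(\hat L\|L)$ to cancel the binomial probabilities and reduce the expectation to a sum that can be bounded via Stirling-type estimates. Everything else --- the Donsker--Varadhan step, the convexity reduction, and the Markov argument --- is essentially mechanical. The $\log(2\sqrt n)$ slack that the paper's later framework seeks to remove enters the bound precisely through this $2\sqrt n$ prefactor.
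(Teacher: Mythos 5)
Your proposal is correct, but note that the paper itself does not prove this theorem at all: it is quoted as a known background result (McAllester's PAC-Bayes-KL bound) with a citation, precisely because its proof lives in the prior literature. The route you sketch --- Donsker--Varadhan change of measure applied to $\phi(w,S)=n\,\mathrm{KL}(\hat L(S,w)\|L(w))$, the Jensen step using joint convexity of $\mathrm{KL}$ to pass from $\mathbb{E}_{P_{W|S}}[\mathrm{KL}(\hat L(S,W)\|L(W))]$ to $\mathrm{KL}(P[\hat L]\|P[L])$, Fubini plus Markov's inequality on the prior-side exponential moment, and Maurer's lemma $\mathbb{E}_S[e^{n\,\mathrm{KL}(\hat L(S,w)\|L(w))}]\le 2\sqrt n$ --- is exactly the standard Seeger--Maurer--McAllester argument that underlies the cited statement, and your quantifier structure (the bad event depends only on the prior, so the bound in fact holds uniformly over posteriors) is sound. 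The one point to be explicit about is that Maurer's moment bound is itself a nontrivial imported result rather than something you establish; you acknowledge this, and invoking it is no worse than what the paper does in citing the whole theorem. You also correctly locate the source of the $\log(2\sqrt n)$ slack in that prefactor, which is the term the paper's DPI-based framework later removes.
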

\noindent
where $Q_{W}(w)$ is a prior distribution over the hypothesis space $\mathcal{W}$--specified before seeing training samples $S$. Furthermore, in the PAC-Bayesian framework, $W\sim P_{{W}|S}$ is the output of the posterior distribution (or the learning algorithm) on $S$.

\subsection{Comparing Occam's Razor (OR) and PAC-Bayes Bound}
To compare the two bounds, we specialize the PAC‑Bayes bound \eqref{eq:OR} by choosing the posterior $P_{W\mid S}(w')=\mathbf{1}_{\{w'=w\}}$. Then the term $\mathrm{KL}(P_{W|S}||Q_ W)=\log(1/Q_ W(w))$ The PAC-Bayes bound becomes
\begin{align}
\forall w,\  \,\mathrm{KL}(\hat{L}(S, w)||L(w))   \leq \frac{\log \frac{1}{Q_{W}(w)}+\log \frac{2\sqrt{n}}{\delta}}{n}.
\label{pac-bayes-pointmass}
\end{align}

Comparing this to the OR bound \eqref{eq:OR}, we see an extra term $\log 2\sqrt{n}$.
%Can we recover the same result using \eqref{eq:pac-bayes-general}? Yes, we can, by defining $P_{\mathcal{W}|S}$ that puts all the mass on $w$ that maximizes $\mathrm{nKL}(L(w), \hat{L}(S, w)) - D(P_{\mathcal{W}|S} \| Q)$, namely $P_{\mathcal{W}|S}(w|s) = \delta(w^*)$ where
%%\[w^* = \arg\max_{w} \mathrm{nKL}(L(w), \hat{L}(S, w)) - \log \frac{1}{Q(w)}.\]
Then an open question that is worth studying \cite{langford2002quantitatively}: \textbf{Can the PAC-Bayes bounds be as tight as the OR bound?} To be specific, when specializing $P_{{W}|S}$ to a deterministic algorithm, can we remove the term $\log2\sqrt{n}$ from the PAC-Bayes bounds? 
Among the few works on this problem, \cite{10.1561/2200000100} investigates more general PAC‑Bayes bounds that use alternative $d$ functions:
\begin{align*}
\mathbb{P}_{S}\Bigl\{
\forall P_{W},\ 
\lambda\,d\bigl(P[L],\,P[\hat L]\bigr)
&\le
D\bigl(P_{W}\,\Vert\,Q_{W}\bigr)
+ \log\Phi(\lambda)\\
&\quad\;+\;\log\frac1\delta
\Bigr\}
\;\ge\;
1 - \delta.
\end{align*}

For example, if we use Catoni's function $C_\beta$ \cite{catoni2007pac}, and optimize the parameter $\beta$, then the $\log 2\sqrt{n}$ term (from $\Phi(\lambda)$) will be removed, which is not allowed for a classic PAC-Bayes bound \cite{foong2021tight}. Here we study this problem from different perspectives. Our work aims to bridge the gap between the PAC-Bayesian framework and the OR bound by the DPI.

\section{Main results}

\subsection{Some useful lemmas}
Following the same technique introduced in~\cite{9444402}, we derived three key lemmas by combining DPI with the Rényi divergence and two $f$-divergences. In particular, a similar result for the KL divergence appeared in \cite{Gastpar22}.

In this subsection, we define two probability spaces \( ({\Omega}, \mathcal{F},P) \), \( ({\Omega}, \mathcal{F},Q) \), where ${\Omega}=\mathcal{X\times Y}$. Let $E\in\mathcal{F}$ be a (measurable) event.
\begin{lemma}(Change of measure with Rényi Divergence).
\label{DPI_original_bound_1}
For any \( \alpha > 1 \) and any event \( E \in \mathcal{F} \), we have the bound
\begin{equation}
P(E)\le Q(E)^{\frac{\alpha-1}{\alpha}}e^{\frac{\alpha-1}{\alpha}D_\alpha(P||Q)}.
\end{equation}
\end{lemma}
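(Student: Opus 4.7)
The plan is to apply the Rényi form of the Data Processing Inequality (Proposition~\ref{prop:DPI}(ii)) to the binary kernel that collapses $\Omega$ onto $\{0,1\}$ via the indicator of the event $E$. Concretely, I would consider the deterministic channel $W(y\mid x)=\mathbf{1}_{\{y=\mathbf{1}_{E}(x)\}}$, so that pushing $P$ and $Q$ through $W$ produces two Bernoulli distributions $P_Y=\mathrm{Bern}(P(E))$ and $Q_Y=\mathrm{Bern}(Q(E))$. DPI then gives
\[
D_\alpha\bigl(\mathrm{Bern}(P(E))\,\|\,\mathrm{Bern}(Q(E))\bigr)\;\le\;D_\alpha(P\|Q).
\]

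Next I would write the Rényi divergence on the right-hand Bernoulli pair explicitly using the definition given in the excerpt: with $p:=P(E)$ and $q:=Q(E)$,
\[
D_\alpha\bigl(\mathrm{Bern}(p)\,\|\,\mathrm{Bern}(q)\bigr)=\frac{1}{\alpha-1}\ln\!\bigl(p^{\alpha}q^{1-\alpha}+(1-p)^{\alpha}(1-q)^{1-\alpha}\bigr).
\]
Since $\alpha>1$ makes the prefactor $\tfrac{1}{\alpha-1}$ positive and the second summand $(1-p)^{\alpha}(1-q)^{1-\alpha}$ is nonnegative, dropping it gives the one-sided lower bound
\[
D_\alpha\bigl(\mathrm{Bern}(p)\,\|\,\mathrm{Bern}(q)\bigr)\;\ge\;\frac{1}{\alpha-1}\ln\!\bigl(p^{\alpha}q^{1-\alpha}\bigr)=\frac{\alpha}{\alpha-1}\ln p-\ln q.
\]
Combining this with the DPI inequality and rearranging (multiplying through by $\tfrac{\alpha-1}{\alpha}>0$ and exponentiating) yields
\[
\ln p\;\le\;\frac{\alpha-1}{\alpha}\bigl(\ln q+D_\alpha(P\|Q)\bigr),
\]
which exponentiates to exactly the claimed inequality $P(E)\le Q(E)^{(\alpha-1)/\alpha}\,e^{(\alpha-1)D_\alpha(P\|Q)/\alpha}$.

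\textbf{Where the work is.} The conceptual step is identifying the indicator kernel as the right ``data processing'' to apply, after which the proof is essentially bookkeeping. The only genuinely delicate point is handling degenerate cases: if $Q(E)=0$ one must first argue $P(E)=0$ (otherwise $D_\alpha(P\|Q)=+\infty$ under the usual conventions, in which case the bound is vacuous), and if $Q(E)=1$ the dropped term $(1-q)^{1-\alpha}$ with $\alpha>1$ is $+\infty$ so the Bernoulli-Rényi expression must be interpreted as a limit; in either boundary case the inequality holds trivially or by continuity, so the argument above handles the substantive regime $0<Q(E)<1$. I would expect the main presentational obstacle to be convincing the reader that the ``tightening by truncation'' step (keeping only the $p^\alpha q^{1-\alpha}$ summand) is not wasteful for the intended downstream application to generalization bounds, since later sections will apply this lemma with $E$ chosen to be a rare ``bad'' event where $p$ is small and the discarded mass term is the dominant one.
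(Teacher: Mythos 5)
Your proposal is correct and follows essentially the same route as the paper: the indicator kernel collapsing to Bernoulli marginals, the Rényi DPI, dropping the nonnegative $(1-p)^{\alpha}(1-q)^{1-\alpha}$ term, and rearranging. Your extra care with the degenerate cases $Q(E)\in\{0,1\}$ is a harmless addition the paper omits.
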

\begin{proof}
We define a fixed kernel $W(y|x)$ to generate $P_Y(y)$ and $Q_Y(y)$:
\[
\begin{cases} 
     W(y=1|x) = 1,\ \textit{if} \ x\in E,   \\ 
     W(y=0|x) = 1,\ otherwise. 
\end{cases}
\]
Notice for all $x$, we have $W(y=1|x)+W(y=0|x)=1$.
By Proposition~\ref{prop:DPI}, the Rényi divergence satisfies
\[
D_\alpha(P_X \| Q_X) \ge D_\alpha(P_Y \| Q_Y), \quad \text{for any } \alpha \in (1, \infty).
\]
Since \(Y \in \{0,1\}\), the RHS becomes
\begin{align*}
D_\alpha \big( P_Y(y) \| Q_Y(y) \big) = \frac{1}{\alpha -1} \log \sum_{y \in \{0,1\}}P_Y(y)^\alpha Q_Y(y)^{1-\alpha},
\end{align*}
where\begin{align*}
P_Y(y=1)&= \sum_{x \in E} W(y=1|x) P_X(x) \\
&\quad\ + \sum_{x \notin E} W(y=1|x) P_X(x)\\ &= P(E),
\end{align*}
and $P_Y(y=0)= 1 - P(E)$. Similarly, we have $Q_Y(y=1) = Q(E)$, $Q_Y(y=0) = 1 - Q(E)$. Thus when $\alpha\in(1,\infty)$, we have
\begin{align*}
D_{\alpha}(P_X(x)||Q_X(x)) &\geq \frac{1}{\alpha-1}\log [P(E)^\alpha Q(E)^{1-\alpha}\\
&\phantom{\geq\frac{1}{\alpha-1}} +(1 - P(E))^\alpha(1 - Q(E))^{1-\alpha}]\\
&\geq \frac{1}{\alpha-1}\log [P(E)^\alpha Q(E)^{1-\alpha}],
\end{align*}
then we have proved the Lemma~\ref{DPI_original_bound_1} by rearranging the above inequality.
\end{proof}
\begin{lemma}(Change of measure with Hellinger $p$-Divergence). 
\label{DPI_original_bound_hellinger_p_lemma}
 For any $p>1$ and any event \( E \in \mathcal{F} \) such that \( P(E) < \frac{1}{2} \) and \( Q(E) < \frac{1}{2} \), we have the bound
\begin{align}
P(E)\le \left[1+Q(E)^{(1-p)}\right]^{-\frac{1}{p}}\left[(p-1)\mathcal{H}^{p}(P||Q)+1\right]^{\frac{1}{p}}
\label{DPI_original_bound_hellinger_p_inequality}.
\end{align}
\end{lemma}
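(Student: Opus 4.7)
The plan is to mirror the proof of Lemma~\ref{DPI_original_bound_1}, replacing the Rényi divergence with the Hellinger $p$-divergence, which is an $f$-divergence generated by $f(t) = (t^{p}-1)/(p-1)$, and then doing a small extra rearrangement to absorb one of the two Bernoulli terms coming from DPI.

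First, I would reuse the deterministic binary kernel $W(y\mid x) = \mathbf{1}\{x\in E\}$ that pushes $P_X,Q_X$ forward to Bernoulli laws $P_Y,Q_Y$ with $P_Y(1)=P(E)$ and $Q_Y(1)=Q(E)$. Applying Proposition~\ref{prop:DPI}(i) for $f$-divergences and writing out the two-point Hellinger divergence on the reduced side yields
\[
(p-1)\mathcal{H}^{p}(P\|Q) + 1 \;\geq\; P(E)^{p}Q(E)^{1-p} + \bigl(1-P(E)\bigr)^{p}\bigl(1-Q(E)\bigr)^{1-p}.
\]

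Second, I would compare this to the target inequality \eqref{DPI_original_bound_hellinger_p_inequality}. Raising the target to the $p$-th power, it is equivalent to
\[
P(E)^{p}\bigl(1 + Q(E)^{1-p}\bigr) \;\leq\; (p-1)\mathcal{H}^{p}(P\|Q) + 1,
\]
so after substituting in the DPI bound it suffices to show $(1-P(E))^{p}(1-Q(E))^{1-p} \geq P(E)^{p}$.

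Third, I would close this auxiliary inequality using the two hypotheses. Since $P(E) < 1/2$ we have $(1-P(E))/P(E) \geq 1$, so $\bigl((1-P(E))/P(E)\bigr)^{p} \geq 1$ for $p>1$. Since $Q(E) < 1/2$ gives $1-Q(E) \in (1/2,1)$, and $1-p < 0$, we also have $(1-Q(E))^{1-p} = 1/(1-Q(E))^{p-1} \geq 1$. Multiplying these two bounds delivers exactly the inequality needed, and rearranging yields \eqref{DPI_original_bound_hellinger_p_inequality}.

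The main obstacle is the third step: DPI alone produces a two-term lower bound on $(p-1)\mathcal{H}^{p}(P\|Q)+1$, and the claim only falls out after the second summand $(1-P(E))^{p}(1-Q(E))^{1-p}$ is controlled from below by $P(E)^{p}$. This is precisely where the hypotheses $P(E) < 1/2$ and $Q(E) < 1/2$ are used; without them the required monotonicity of $t \mapsto t^{p}$ on $(1-t)$ and the sign of $1-p$ no longer combine in the right direction, and the clean $[1+Q(E)^{1-p}]^{-1/p}$ form need not hold.
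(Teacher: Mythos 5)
Your proposal is correct and follows essentially the same route as the paper: the same binary kernel $W(y\mid x)=\mathbf{1}\{x\in E\}$, the same application of Proposition~\ref{prop:DPI}(i) to obtain the two-point lower bound $P(E)^pQ(E)^{1-p}+(1-P(E))^p(1-Q(E))^{1-p}\le (p-1)\mathcal{H}^p(P\|Q)+1$, and the same rearrangement. Your third step simply makes explicit the inequality $(1-P(E))^p(1-Q(E))^{1-p}\ge P(E)^p$ that the paper's proof asserts with only the phrase ``using the assumptions $P(E)<\tfrac12$ and $Q(E)<\tfrac12$,'' so it is a faithful (and slightly more detailed) rendering of the intended argument.
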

\begin{proof}
We define the same fixed kernel $W(y|x)$ as in the proof of Lemma~\ref{DPI_original_bound_1}. Thus for Hellinger $p$-Divergence we have
\begin{align*}
\mathcal{H}^{p}(P_X(x)\|Q_X(x)) 
&\geq \frac{1}{p-1} \left[
(1 - P(E))^p (1 - Q(E))^{1 - p} \right. \notag\\
&\quad\quad\quad\quad\quad\left. + (P(E))^p (Q(E))^{1-p} - 1 \right].
\end{align*}

\noindent
Using the assumptions $P(E)<\frac{1}{2}$ and $Q(E)<\frac{1}{2}$, we can further lower-bound the RHS as
\[\frac{1}{p-1} \left[P(E)^p(1+Q(E)^{(1-p)}) - 1 \right].\]
The claimed result follows by rearranging the terms.
\end{proof}
The conditions $P(E)<\frac{1}{2}$ and $Q(E)<\frac{1}{2}$ are naturally satisfied when \( E \) is defined as the failure event in which the KL-based test-set bound does not hold.

\begin{lemma}(Change of measure with Pearson Chi-Squared Divergence). 
\label{DPI_original_bound_Chi_Squared_lemma}
For any event \( E \in \mathcal{F} \), we have the bound
\begin{align}
P(E)\le Q(E)^\frac{1}{2}(\chi^2(P||Q)+1)^\frac{1}{2}.
\label{DPI_original_bound_Chi_Squared}
\end{align}
\end{lemma}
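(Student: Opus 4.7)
The plan is to mirror the argument used for Lemmas~\ref{DPI_original_bound_1} and~\ref{DPI_original_bound_hellinger_p_lemma} exactly, swapping in the $\chi^2$-divergence and invoking the $f$-divergence half of Proposition~\ref{prop:DPI} (since $\chi^2$ is the $f$-divergence generated by the convex function $f(t)=t^2-1$ with $f(1)=0$). So first I would reuse the binary kernel
\[
W(y=1\mid x) = \mathbf{1}_{\{x\in E\}},\qquad W(y=0\mid x)=\mathbf{1}_{\{x\notin E\}},
\]
which pushes $P_X,Q_X$ forward to Bernoulli laws with $P_Y(1)=P(E)$ and $Q_Y(1)=Q(E)$.

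Next, I would apply Proposition~\ref{prop:DPI}(i) to obtain
\[
\chi^2(P\|Q)\;\ge\;\chi^2(P_Y\|Q_Y)\;=\;\frac{P(E)^2}{Q(E)}+\frac{(1-P(E))^2}{1-Q(E)}-1.
\]
The key step after this is to discard the nonnegative contribution from the complement: since $(1-P(E))^2/(1-Q(E))\ge 0$, we have
\[
\chi^2(P\|Q)\;\ge\;\frac{P(E)^2}{Q(E)}-1,
\]
which rearranges to $P(E)^2\le Q(E)\bigl(\chi^2(P\|Q)+1\bigr)\le Q(E)\bigl(\chi^2(P\|Q)+2\bigr)$, and taking square roots gives \eqref{DPI_original_bound_Chi_Squared}.

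I do not anticipate a serious obstacle: the skeleton of the proof (kernel construction, DPI, reduction to a two-term expression, discarding the complement contribution, rearranging) is word-for-word parallel to the two preceding lemmas. The only thing worth flagging is that dropping the $(1-P(E))^2/(1-Q(E))$ term yields the natural constant $+1$, and the stated constant $+2$ is absorbed by the trivial slack $\chi^2(P\|Q)+1\le\chi^2(P\|Q)+2$; no additional assumption such as $P(E),Q(E)<\tfrac12$ (as in Lemma~\ref{DPI_original_bound_hellinger_p_lemma}) is required, which is why the statement is valid for \emph{any} event $E\in\mathcal{F}$.
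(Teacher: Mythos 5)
Your proposal is correct and follows essentially the same route as the paper: the same binary kernel, the $f$-divergence DPI reducing $\chi^2(P\|Q)$ to the Bernoulli case with parameters $P(E)$ and $Q(E)$, and a rearrangement followed by a square root. The only difference is cosmetic algebra — you work with the sum form $\tfrac{P(E)^2}{Q(E)}+\tfrac{(1-P(E))^2}{1-Q(E)}-1$ and drop the complement term, which in fact yields the slightly sharper constant $\chi^2(P\|Q)+1$ before slackening to $+2$, whereas the paper starts from the equivalent form $\tfrac{(P(E)-Q(E))^2}{Q(E)(1-Q(E))}$ and bounds $1-Q(E)\le 1$ and $2P(E)-Q(E)\le 2$.
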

\begin{proof}
See Appendix~\ref{proof_lemma}
%See the appendix of \cite{guan2025dpipacbayesianframeworkgeneralizationbounds}.
\end{proof}
These three lemmas are inspired by the change-of-measure principle commonly employed in the PAC-Bayesian framework. In PAC-Bayes analysis, the Donsker-Varadhan inequality enables one to bound expectations under an intractable posterior by reweighting expectations under a tractable prior distribution, typically introducing a KL divergence term to quantify the complexity of the posterior.

Within our framework, we exploit the DPI to upper‑bound the posterior distribution \(P(E)\) through several \(f\)-divergences between \(P\) and \(Q\); this idea was first introduced in \cite{9444402}. Each bound comprises a scaled prior term, such as $Q(E)^\gamma$, multiplied by an exponential penalty term that depends on the chosen divergence. These multiplicative correction factors—e.g., $e^{\frac{\alpha - 1}{\alpha} D_\alpha(P \| Q)}$ in the Rényi case—can be viewed as the cost of performing a change of measure from $Q$ to $P$ under the respective divergence. 

This perspective highlights a unifying theme across our results: DPI provides an information-theoretic control analogous to that in PAC-Bayes, enabling generalization bounds through divergence-based reweighting of prior knowledge.

%We are inspired by the change of measure technique in the PAC-Bayesian framework, having an exact estimation of the posterior distribution $P(E)$ is always intractable, DPI enables us to upper-bound the posterior. Here, the term $e^{\frac{\alpha-1}{\alpha}D_\alpha(P(x)||Q(x))}$ can be seen as the cost of our upper bound. 

\subsection{DPI-PAC-Bayes bounds}
% Proposed Method (≈1.5 pages, about 60 lines)
% Detail your theoretical derivations, including how the bound recovers to the OR bound
% under special conditions and its advantages over traditional PAC-Bayes bound.
Building on the preceding lemmas that fuse the DPI with Rényi ($D_\alpha$), Chi-Squared ($\mathcal{\chi}^2$), and Hellinger ($\mathcal{H}^p$) $p$-divergences, we now establish the core results developed by our DPI‑PAC‑Bayesian framework. The next three theorems present train‑set generalization bounds in terms of \(D_\alpha\), \(\mathcal{H}^p\), and \(\chi^2\) divergences developed by our framework.
\begin{theorem}
    ($D_\alpha$-PAC-Bayes bound). Let $Q$ be a distribution over a finite hypothesis space $\mathcal{W}$ such that $Q_{\min} := \min_w Q_{W}(w)>0$. For any $\alpha>1$, and for any $\delta\in(0,1]$, with probability at least $1-\delta$ over $P_S$, it holds that
\begin{align}
\forall w,\ 
\mathrm{KL}(\hat{L}(S, w)||L(w))
\leq 
\frac{
\log \frac{1}{Q_{min}} 
+ \frac{\alpha}{\alpha - 1} \log \frac{1}{\delta}
}{
n}.
\end{align}
\label{PAC-Bayes_with_D_alpha}

\end{theorem}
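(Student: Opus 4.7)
The plan is to apply Lemma~\ref{DPI_original_bound_1} on the joint space $\mathcal{Z}^n\times\mathcal{W}$, choosing the posterior so that the image of the bad event captures the uniform-in-$w$ failure we want to control. I would fix a threshold $\epsilon$ (to be pinned down at the end) and take the bad event to be $E = \{(s,w) : \mathrm{KL}(\hat{L}(s,w)\|L(w)) > \epsilon\}$. As the two measures in the lemma I would use $Q_{S,W} = P_S\,Q_W$ (the data-independent prior coupling) and $P_{S,W}(s,w) = P_S(s)\,P_{W|S}(w|s)$, where $P_{W|S}(\cdot|s)$ is chosen as a point mass at a measurable selector $w^\star(s)$ that picks some $w$ witnessing the failure of the bound on $s$ whenever such a witness exists. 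With this coupling, $P_{S,W}(E) = P_S\bigl(\exists\, w:\, \mathrm{KL}(\hat{L}(S,w)\|L(w)) > \epsilon\bigr)$, so forcing $P_{S,W}(E)\le\delta$ is equivalent to the uniform-in-$w$ conclusion I want.

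Next I would bound the two quantities that enter Lemma~\ref{DPI_original_bound_1}. Under $Q_{S,W}$ the hypothesis is independent of the data, so for each fixed $w$ the Chernoff form of the KL test-set bound \eqref{eq:KL-test-bound} gives $P_S\bigl(\mathrm{KL}(\hat{L}(S,w)\|L(w)) > \epsilon\bigr) \le e^{-n\epsilon}$, and averaging over $w\sim Q_W$ yields $Q_{S,W}(E)\le e^{-n\epsilon}$. For the Rényi divergence, the product structure of $Q_{S,W}$ collapses the $s$-integral, leaving
\[
D_\alpha(P_{S,W}\|Q_{S,W}) = \tfrac{1}{\alpha-1}\log\mathbb{E}_{P_S}\!\Bigl[\sum_w P_{W|S}(w|S)^\alpha\,Q_W(w)^{1-\alpha}\Bigr],
\]
which for the point-mass posterior reduces to $\tfrac{1}{\alpha-1}\log\mathbb{E}_{P_S}\bigl[Q_W(w^\star(S))^{1-\alpha}\bigr]$; since $1-\alpha<0$ and $Q_W(w)\ge Q_{\min}$ for every $w$, this is at most $\log(1/Q_{\min})$.

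Substituting these two estimates into Lemma~\ref{DPI_original_bound_1} gives $P_{S,W}(E)\le e^{-n\epsilon(\alpha-1)/\alpha}\,Q_{\min}^{-(\alpha-1)/\alpha}$. Setting the right-hand side $\le\delta$ and solving for $\epsilon$ produces exactly $\epsilon = \bigl[\log(1/Q_{\min}) + \tfrac{\alpha}{\alpha-1}\log(1/\delta)\bigr]/n$, which is the threshold claimed in the theorem.

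The step I expect to be the main obstacle is the bound on $D_\alpha(P_{S,W}\|Q_{S,W})$. The point-mass Rényi divergence naturally produces $\log(1/Q_W(w^\star))$, which would be as tight as the Occam's Razor bound if the threshold were allowed to depend on the selected $w^\star(S)$; because I insist on a uniform-in-$w$ guarantee via a single selector that may depend on $S$, I have to pass to the worst-case $\log(1/Q_{\min})$, and this is exactly where the finite-hypothesis assumption $Q_{\min}>0$ is used. Everything else is routine: the Rényi product identity and the Chernoff form of the KL test-set bound are standard.
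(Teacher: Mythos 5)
Your proposal is correct and follows essentially the same route as the paper: the same coupling $P_{S,W}=P_S P_{W|S}$ with a failure-witnessing point-mass posterior versus $Q_{S,W}=P_S Q_W$, the same application of Lemma~\ref{DPI_original_bound_1}, the test-set bound to control $Q(E)$, and the $Q_{\min}$ estimate for the Rényi term. The only cosmetic difference is that you keep the threshold $\epsilon$ free and solve for it at the end, whereas the paper fixes the threshold at $\log(1/\delta)/n$ and reparameterizes $\delta'$ afterward.
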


\begin{proof}
We choose \( P_{S,{W}} = P_S P_{{W}|S} \), and \( Q_{S,{W}} = P_S Q_{W} \) for some \( Q_{W} \), and define the event
\begin{align}
\label{Event_general}
E := \left\{ (S, W) : \mathrm{KL}( \hat{L}(S, W)||L(W)) \geq \frac{\log \frac{1}{\delta}}{n} \right\},
\end{align}

For any specific $w$, we define the event
\begin{align}
\label{Event_for_specific_w}
  E_w := \left\{ S: \mathrm{KL}( \hat{L}(S, w)||L(w)) \geq \frac{\log \frac{1}{\delta}}{n} \right\}.  
\end{align}

Then we apply the Lemma~\ref{DPI_original_bound_1} to get

\[
P(E) \leq Q(E)^{\frac{\alpha -1}{\alpha}} \left(\sum_{w,s} P(s,w)^\alpha Q(s,w)^{1-\alpha} \right)^{\frac{1}{\alpha}}.
\]
When $\alpha>1$, for $ Q(E)^{\frac{\alpha -1}{\alpha}}$ we have 
\begin{align*}
    Q(E)^{\frac{\alpha -1}{\alpha}}&= \left( \sum_{w} Q_W(w) \, \mathbb{P}\{E_w\} \right)^{\frac{\alpha - 1}{\alpha}} \\
&\le \left( \delta \sum_{w} Q_W(w) \right)^{\frac{\alpha - 1}{\alpha}}\\
&= \delta^{\frac{\alpha - 1}{\alpha}},
\end{align*}
where we used $\sum_{w} Q_W(w)=1$, and also the test-set bound in Theorem~\ref{eq:KL-test-bound} which states $\mathbb{P}\{E_w\}\leq\delta$. Furthermore, we have
\begin{align*}
\sum_{w,s} P(s,w)^\alpha Q(s,w)^{1-\alpha}
&= \sum_{w,s} P_S(s) P_{W|S}(w|s)^\alpha Q_W(w)^{1-\alpha} \\
&= \sum_{s} Q_W(w^*(s))^{1-\alpha} P_S(s) \\
&\le Q_{\min}^{1-\alpha},
\end{align*}
where $P_{W|S}(w')=\mathbf{1}_{\{w'=w^*\}}$ is a distribution that concentrates its mass on the hypothesis $w^*$ is defined as
\[w^*\in\text{argmax}_{w\in \mathcal{W}} \mathrm{KL}(\hat{L}(S,w)||L(w)),\] i.e. $w^*$ is any maximizer of the KL divergence between empirical and population loss.
\noindent
In this case the bound becomes
\begin{align*}
{P}\{E\} &= \mathbb{P}_S\left\{ 
\sup_{w} \mathrm{KL}(\hat{L}(S, w)||L(w)) 
\ge \frac{\log \frac{1}{\delta}}{n} 
\right\}\\
&\le \delta^{\frac{\alpha - 1}{\alpha}}
Q_{\min}^{\frac{1 - \alpha}{\alpha}}.
\end{align*}
By reparameterizing $\delta'$ as $\delta^{\frac{\alpha - 1}{\alpha}} 
Q_{\min}^{\frac{1 - \alpha}{\alpha}}$, we can then achieve
\begin{align*}
&\mathbb{P}_S \left\{
\exists w,\, \mathrm{KL}(\hat{L}(S, w)||L(w)) \geq 
\frac{
\log \frac{1}{Q_{min}} + \frac{\alpha}{\alpha - 1} \log \frac{1}{\delta'}
}{n}
\right\}\\
&\quad\quad\quad\quad\quad\quad\quad\quad\quad\quad\quad\quad\quad\quad\quad\quad\quad\quad\quad\quad \leq \delta',    
\end{align*}
or equivalently
\begin{align*}
&\mathbb{P}_S \left\{
\forall w,\, \mathrm{KL}(\hat{L}(S, w)||L(w)) \leq 
\frac{
\log \frac{1}{Q_{min}} + \frac{\alpha}{\alpha - 1} \log \frac{1}{\delta'}
}{n}
\right\}\\
&\quad\quad\quad\quad\quad\quad\quad\quad\quad\quad\quad\quad\quad\quad\quad\quad\quad\quad\geq 1-\delta'.    
\end{align*}
\end{proof}
The main novelty of our framework comes from specifying an "undesirable event" $E$, where the flexible choice of $E$ provides the flexibility for our framework but also achieves a tighter generalization bound. Therefore, defining an optimal and measurable "undesirable event" $E$ can be an interesting question to study in the future.
\begin{theorem}
\label{proof of theorem5}
    ($\mathcal{H}^p$-PAC-Bayes bound). Let $Q$ be a distribution over a finite hypothesis space $\mathcal{W}$ such that $Q_{\min} := \min_w Q_{W}(w)>0$. For any $p>1$, and for sufficiently small $\delta>0$ , with probability at least $1-\delta$ over $P_S$, it holds that
\begin{align}
\forall w,\ 
\mathrm{KL}(\hat{L}(S, w)||L(w))
\leq 
\frac{
\log\left[(Q_{min})^{1-p}\delta^{-p}{-1}\right] }{(p-1)n}.
\end{align}
\label{PAC-Bayes_with_H_p}

\end{theorem}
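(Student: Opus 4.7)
The plan is to mirror the argument used in Theorem~\ref{PAC-Bayes_with_D_alpha} for the $D_\alpha$-PAC-Bayes bound, but invoke Lemma~\ref{DPI_original_bound_hellinger_p_lemma} in place of Lemma~\ref{DPI_original_bound_1}. I would keep the same joint measures $P_{S,W}=P_S P_{W\mid S}$ and $Q_{S,W}=P_S Q_W$, with $P_{W\mid S}(\cdot\mid s)=\mathbf{1}_{\{\cdot=w^{*}(s)\}}$ the point mass on $w^{*}(s)\in\text{argmax}_{w\in\mathcal{W}}\mathrm{KL}(\hat L(s,w)\|L(w))$, and reuse the failure event $E$ in \eqref{Event_general} together with its section $E_w$ in \eqref{Event_for_specific_w}.

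The argument then proceeds in three short steps. First, provided $P(E),Q(E)<\tfrac{1}{2}$, Lemma~\ref{DPI_original_bound_hellinger_p_lemma} yields
\[
P(E)\le \bigl[1+Q(E)^{1-p}\bigr]^{-1/p}\bigl[(p-1)\mathcal{H}^{p}(P_{S,W}\|Q_{S,W})+1\bigr]^{1/p}.
\]
Second, I would control the two factors on the right using the same mechanisms that appeared in the Rényi case. The point-mass form of $P_{W\mid S}$ collapses the Hellinger factor cleanly:
\[
(p-1)\mathcal{H}^{p}(P_{S,W}\|Q_{S,W})+1=\sum_{s}P_S(s)\,Q_W(w^{*}(s))^{1-p}\le Q_{\min}^{1-p},
\]
where the inequality uses $1-p<0$. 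A test-set bound identical to the one in Theorem~\ref{PAC-Bayes_with_D_alpha} gives $Q(E)=\sum_{w}Q_W(w)\,\mathbb{P}\{E_w\}\le \delta$, and since $x\mapsto x^{1-p}$ is decreasing on $(0,\infty)$ for $p>1$, this in turn yields $1+Q(E)^{1-p}\ge 1+\delta^{1-p}$. Combining both estimates,
\[
P(E)\le \bigl[1+\delta^{1-p}\bigr]^{-1/p}\,Q_{\min}^{(1-p)/p}.
\]
Third, I would reparameterize by setting this upper bound equal to a new confidence level $\delta'$. A single line of algebra gives $\delta^{1-p}=Q_{\min}^{1-p}(\delta')^{-p}-1$, so $\log(1/\delta)=\tfrac{1}{p-1}\log[Q_{\min}^{1-p}(\delta')^{-p}-1]$; renaming $\delta'\leftarrow\delta$ in the relaxed test-set-style bound $\mathrm{KL}\le \log(1/\delta)/n$ yields exactly the claimed inequality.

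The delicate point, and the main obstacle I anticipate, is verifying the hypothesis $P(E),Q(E)<\tfrac{1}{2}$ that Lemma~\ref{DPI_original_bound_hellinger_p_lemma} requires. The $Q$-side is immediate because $Q(E)\le \delta$, which is below $\tfrac{1}{2}$ whenever $\delta<\tfrac{1}{2}$. The $P$-side is subtler, since $P(E)$ is the very quantity we are trying to bound; the cleanest fix is to restrict to $\delta$ small enough that the derived upper bound $[1+\delta^{1-p}]^{-1/p}Q_{\min}^{(1-p)/p}$ is itself below $\tfrac{1}{2}$, and then appeal to a short self-consistency argument. This is precisely what the "sufficiently small $\delta$" qualifier in the statement encodes, and it aligns with the remark following Lemma~\ref{DPI_original_bound_hellinger_p_lemma}; once it is in place, the remainder of the proof reduces to careful bookkeeping of the monotonicity in $p>1$ when passing between $\delta$, $Q_{\min}$, and the final log-expression.
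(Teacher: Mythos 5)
Your main computation follows the paper's proof essentially line for line: same joint measures $P_{S,W}=P_S P_{W\mid S}$ and $Q_{S,W}=P_S Q_W$ with the point-mass posterior on $w^*(s)$, the same bound $\sum_{s,w}P(s,w)^pQ(s,w)^{1-p}=\sum_s P_S(s)Q_W(w^*(s))^{1-p}\le Q_{\min}^{1-p}$, the same test-set estimate $Q(E)\le\delta$, and the same reparameterization $\delta'=Q_{\min}^{(1-p)/p}(1+\delta^{1-p})^{-1/p}$, whose inversion indeed gives the stated bound. The $Q$-side of the hypothesis of Lemma~\ref{DPI_original_bound_hellinger_p_lemma} is also handled correctly (and in fact more simply than in the paper) via $Q(E)\le\delta<\tfrac12$.

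The genuine gap is your treatment of the condition $P(E)<\tfrac12$. Your proposed fix---choose $\delta$ small enough that the \emph{derived} upper bound $[1+\delta^{1-p}]^{-1/p}Q_{\min}^{(1-p)/p}$ is below $\tfrac12$ and then invoke a ``self-consistency argument''---is circular as stated: that upper bound is a consequence of Lemma~\ref{DPI_original_bound_hellinger_p_lemma}, which is only applicable once $P(E)<\tfrac12$ is already known, so the conclusion cannot be used to certify its own hypothesis, and a bootstrap would need an extra continuity/monotonicity input you do not supply (and which can fail at an atom of the distribution of the supremum). The paper avoids this entirely: it sets $K:=\sup_w \mathrm{KL}(\hat L(S,w)\Vert L(w))$ and observes that $G(\delta):=\mathbb{P}_S\{K\ge \log(1/\delta)/n\}=P(E)$ is non-increasing as $\delta$ decreases and tends to $0$ as $\delta\to 0$ (the threshold tends to $\infty$), so $P(E)<\tfrac12$ holds outright for all sufficiently small $\delta$, with the same argument applied to $Q$. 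An alternative non-circular route, if you prefer to stay closer to your style, is to pre-bound $P(E)$ by the unconditional Lemma~\ref{DPI_original_bound_Chi_Squared_lemma}, which gives $P(E)\le \delta^{1/2}\bigl((1+Q_{\min})/Q_{\min}\bigr)^{1/2}$ and hence $P(E)<\tfrac12$ once $\delta$ is small; either repair completes your proof, but as written this step does not stand.
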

\begin{proof}
See Appendix \ref{Proof of Theorem5_appendix}
%See the appendix of \cite{guan2025dpipacbayesianframeworkgeneralizationbounds}.
\end{proof}
\begin{theorem}
\label{chi-PAC-Bayes bound}
    ($\mathcal{\chi}^2$-PAC-Bayes bound). Let $Q$ be a distribution over a finite hypothesis space $\mathcal{W}$ such that $Q_{\min} := \min_w Q_{W}(w)>0$. For any $\delta\in(0,1]$, with probability at least $1-\delta$ over $P_S$, it holds that
\begin{align}
\forall w,\ 
\mathrm{KL}(\hat{L}(S, w)||L(w))
&\leq \frac{\log\frac{1}{Q_{min}}+2\log\frac{1}{\delta}}{n}.
\end{align}
\label{PAC-Bayes_with_chi_square}

\end{theorem}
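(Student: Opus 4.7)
The plan is to mirror the structure of the proofs of Theorem~\ref{PAC-Bayes_with_D_alpha} and Theorem~\ref{PAC-Bayes_with_H_p}, substituting Lemma~\ref{DPI_original_bound_Chi_Squared_lemma} in place of Lemma~\ref{DPI_original_bound_1}. Specifically, I will pick the joint distributions $P_{S,W}=P_S P_{W|S}$ and $Q_{S,W}=P_S Q_W$, define the ``failure'' event $E=\{(S,W):\mathrm{KL}(\hat L(S,W)\|L(W))\ge \log(1/\delta)/n\}$, and then apply the $\chi^2$ change-of-measure bound $P(E)\le Q(E)^{1/2}(\chi^2(P\|Q)+2)^{1/2}$.

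Next, I bound the two factors separately. For $Q(E)$, since the posterior under $Q$ is the prior $Q_W$ which is independent of $S$, the test-set bound (Theorem on equation \eqref{eq:KL-test-bound}) applied hypothesis-wise, together with $\sum_w Q_W(w)=1$, yields $Q(E)\le \delta$, exactly as in the $D_\alpha$ case. For the $\chi^2$ factor, I will specialize $P_{W|S}$ to the point mass $\mathbf{1}_{\{w'=w^*(S)\}}$ at the worst-case hypothesis $w^*(S)\in\arg\max_w \mathrm{KL}(\hat L(S,w)\|L(w))$; then
\[
\chi^2(P\|Q)=\sum_{s,w}\frac{P_S(s)^2 P_{W|S}(w|s)^2}{P_S(s)Q_W(w)}-1 = \sum_s P_S(s)\,\frac{1}{Q_W(w^*(s))}-1 \le \frac{1}{Q_{\min}}-1,
\]
so $\chi^2(P\|Q)+2\le (1+Q_{\min})/Q_{\min}$. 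Combining gives $P(E)\le \delta^{1/2}\bigl((1+Q_{\min})/Q_{\min}\bigr)^{1/2}$.

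Finally, I reparameterize by setting $\delta' := \delta^{1/2}\bigl((1+Q_{\min})/Q_{\min}\bigr)^{1/2}$, so that $\log(1/\delta) = 2\log(1/\delta')+\log\bigl((1+Q_{\min})/Q_{\min}\bigr)$. Substituting into the definition of $E$ and passing to the complementary event yields the claimed high-probability bound. The derivation is entirely parallel to the $D_\alpha$ case and does not require the extra smallness hypothesis that appears in Theorem~\ref{PAC-Bayes_with_H_p}, because Lemma~\ref{DPI_original_bound_Chi_Squared_lemma} has no ``$P(E),Q(E)<1/2$'' requirement. The only substantive step is the computation of $\chi^2(P\|Q)$ under the point-mass posterior; I do not anticipate a genuine obstacle here, since the $\chi^2$-divergence is simpler than the Hellinger-$p$ or Rényi divergence and collapses cleanly under this choice. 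The main care point is just ensuring that the reparameterization is monotonic and that the bound remains non-vacuous, i.e.\ that $\delta'\in(0,1]$ is attainable, which follows from $Q_{\min}\le 1$.
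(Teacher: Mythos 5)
Your proposal is correct and follows essentially the same route as the paper's own proof: the same event $E$, Lemma~\ref{DPI_original_bound_Chi_Squared_lemma}, the test-set bound giving $Q(E)\le\delta$, the point-mass posterior collapsing $\chi^2(P\|Q)+2$ to $(1+Q_{\min})/Q_{\min}$, and the identical reparameterization $\delta'=\delta^{1/2}\bigl((1+Q_{\min})/Q_{\min}\bigr)^{1/2}$. Your observation that no smallness condition on $\delta$ is needed (unlike the Hellinger case) is also consistent with the paper's statement.
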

\begin{proof}
See Appendix~\ref{Theorem5}
%See the appendix of \cite{guan2025dpipacbayesianframeworkgeneralizationbounds}.
\end{proof}
\begin{remark}
 The DPI‑PAC‑Bayesian framework can be applied to an arbitrary f-divergence and yields generalization bounds whose relative tightness is governed by their divergence parameters. The $\mathcal{\chi}^2$-PAC-Bayes bound is parameter‑free, while both $D_\alpha$-PAC-Bayes and $\mathcal{H}^p$-PAC-Bayes bounds possess a free parameter---$\alpha>1$ and $p>1$---that modulates the trade‑off between the divergence penalty and the confidence term $\log(\frac{1}{\delta})$. 
\end{remark}
\subsection{Empirical Evaluation of Bounds}
We apply our bounds in a logistic classification problem in a 2-dimensional space, where $w\in\mathbb{R}^2$, $Z_i=(\textbf{x}_i,y_i)\in\mathbb{R}^3$. Each $\textbf{x}_i=\{(x_{i1},x_{i2})\}$ is sampled from a multivariate Gaussian distribution $\mathcal{N}(0,\textbf{I}_2)$. The label $y\in\{0,1\}$ is generated from the Bernoulli distribution with probability $p(y=1|\textbf{x}_i,w^*)=\frac{1}{1+e^{-\textbf{x}^T_i w^*}}$, where $w^*=(0.5,0.5)$. The generalization gap is measured by $\mathrm{KL}(\hat{L}(S,W)||L(W))$, where the loss function is given by 0-1 loss $\ell(Z_i,w)=I(\frac{1}{2}(\operatorname{sign}(x^T_i w)+1)\neq y_i)$.
We work with a finite hypothesis space $\mathcal{W}$ with $|\mathcal{W}| = 50$.  Each hypothesis is a weight vector $w \in \mathbb{R}^2$ whose coordinates are sampled independently from the uniform distribution $\mathrm{Unif}([-100,100])$.  
Because there is no prior information about the data, it is natural for us to assign the same importance (or probability) to each hypothesis, then we adopt the uniform prior distribution on $\mathcal{W}$ (i.e. ${Q_{min}}={Q_{W}(w)}$).

In Figure~\ref{fig:example}, we compare the tightness of the bounds derived by our framework, where we change the size of the training sample from 100 to 1600. For the ${D}_\alpha$-PAC-Bayes bound and the $\mathcal{H}^p$-PAC-Bayes bound, we experiment with different parameters, where $\alpha,\ p\in\{10,10^3,10^7\}$. To make a full comparison, we also compute the PAC-Bayes bound when the posterior is constrained to a point mass.
\begin{figure}[htbp]
  \centering
  \includegraphics[width=1\linewidth]{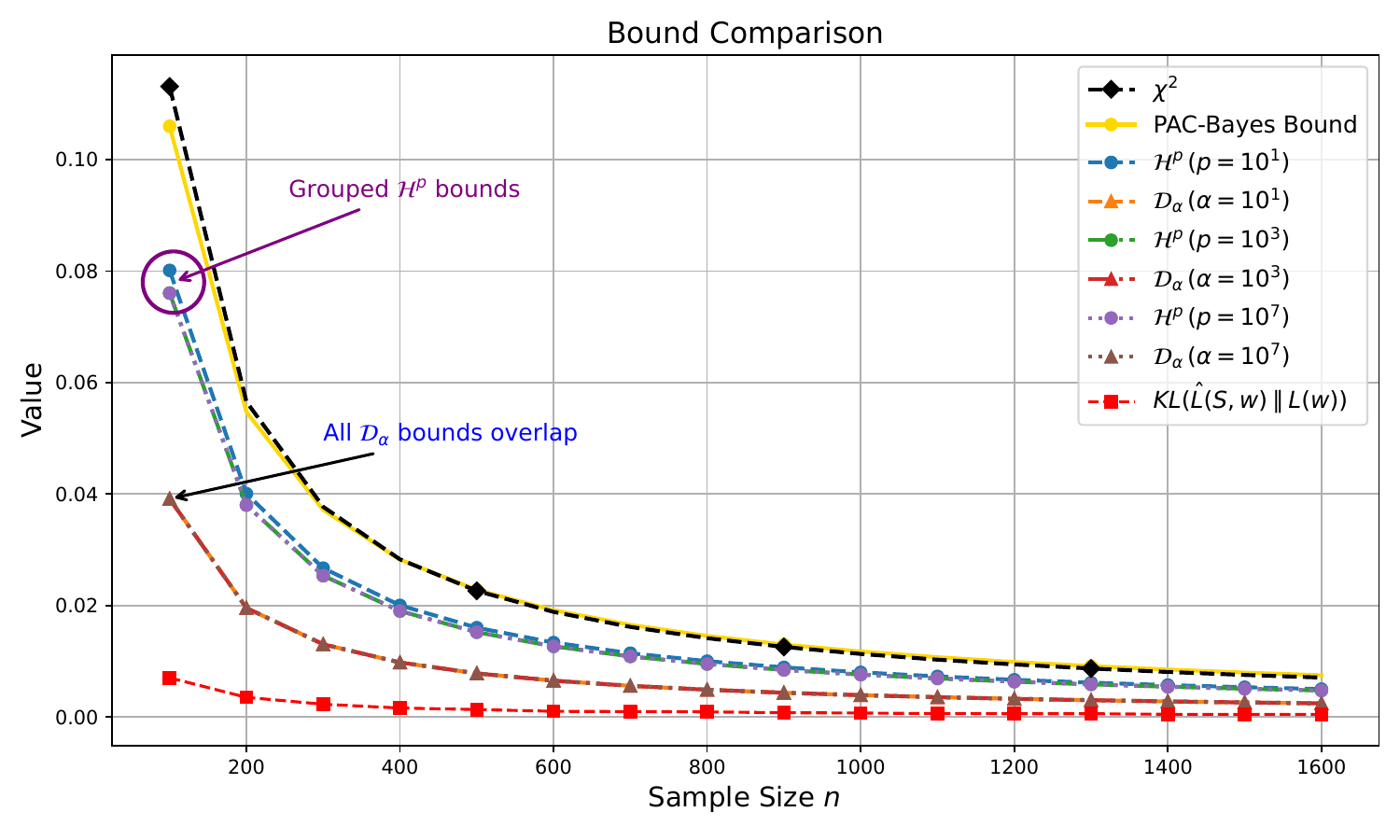}
  \caption{Comparison for the tightness of three bounds. $\delta=0.025$}
  \label{fig:example}
\end{figure}

Across the entire range of \(n\), the observed ordering of tightness is $
\mathcal{D}_{\alpha}\text{-PAC‑Bayes}<\mathcal{H}^{p}\text{-PAC‑Bayes}<
\text{PAC‑Bayes}.
$
While the $\chi^{2}$-PAC‑Bayes curve is the loosest among the variants, its gap to the standard PAC‑Bayes bound narrows as $n$ grows, making the two essentially comparable for large sample sizes. In summary, under the present experimental setting, the \(\mathcal{D}_{\alpha}\)-PAC‑Bayes bound delivers the most parameter‑robust capability and tightest guarantee.
\subsection{Connection to the OR bound and the PAC-Bayes bound}
The bounds developed by the DPI-PAC-Bayesian framework exhibit a close connection to the OR bound and the PAC-Bayes bound.  In particular, we will show in the sequel that when the prior distribution $Q_{W}(w)$ is chosen to be uniform, the $D_\infty$-PAC-Bayes and $\mathcal{H}^\infty$-PAC-Bayes bounds recover to the OR bound. Additionally, our bounds are in spirit similar to the PAC-Bayes bound, but our bounds are provably tighter than the PAC-Bayes bound in the special case. 
An important note is that the bounds converge monotonically to their tightest forms when $\alpha, \ p\rightarrow\infty$.

\begin{corollary}[Limiting $D_{\infty}$/ $\mathcal{H}^{\infty}$-PAC‑Bayes bound]
\label{cor:optimal}
Let the prior $Q_{{W}}$ be uniform over $\mathcal{W}$. For any $\delta\in(0,1]$, with probability at least $1-\delta$ over $P_S$, we have  
\begin{equation*}
\forall\,w,\ 
\mathrm{KL}\!(\hat{L}(S,w)||L(w))\;\le\;
\frac{ \log\frac1{Q_{{W}}(w)}\;+\;\log\frac1\delta}{n}.
\end{equation*}
The same bound is obtained in either of the following limits: $\alpha\to\infty$ in the $D_\alpha$‑PAC‑Bayes bound; $p\to\infty$ in the $\mathcal{H}^p$‑PAC‑Bayes bound.

\end{corollary}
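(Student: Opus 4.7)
The plan is to substitute the uniform prior, so that $Q_{\min}=Q_{W}(w)$ for every hypothesis $w$, into each parametric bound and then pass to the limit on the right‑hand side. Because both statements to be invoked --- the $D_\alpha$‑PAC‑Bayes bound and the $\mathcal{H}^p$‑PAC‑Bayes bound --- hold for every fixed parameter on an event of probability at least $1-\delta$, the same event will carry the limiting inequality, so no additional probabilistic machinery is required.

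For the Rényi limit the substitution produces the bound $\bigl[\log(1/Q_W(w))+\tfrac{\alpha}{\alpha-1}\log(1/\delta)\bigr]/n$. Writing $\tfrac{\alpha}{\alpha-1}=1+\tfrac{1}{\alpha-1}$ shows that this decreases monotonically to $\bigl[\log(1/Q_W(w))+\log(1/\delta)\bigr]/n$ as $\alpha\to\infty$, which is precisely the OR form. This direction is a one‑liner.

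For the Hellinger limit, after substitution the bound is $\bigl[\log\!\bigl(Q_W(w)^{1-p}\delta^{-p}-1\bigr)\bigr]/\bigl[(p-1)n\bigr]$. My key algebraic move is to factor $Q_W(w)^{1-p}\delta^{-p}$ out of the bracket and split the logarithm as
\begin{align*}
(p-1)\log\tfrac{1}{Q_W(w)}+p\log\tfrac{1}{\delta}+\log\!\bigl(1-Q_W(w)^{p-1}\delta^{p}\bigr).
\end{align*}
Dividing by $(p-1)n$ and letting $p\to\infty$, the first two summands converge (using $p/(p-1)\to 1$) to $\bigl[\log(1/Q_W(w))+\log(1/\delta)\bigr]/n$, while the third summand vanishes because $Q_W(w)\delta\in(0,1)$ forces $Q_W(w)^{p-1}\delta^{p}\to 0$ and hence $\log(1-o(1))=o(1)$.

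The main obstacle — such as it is — lies in the Hellinger rearrangement; the Rényi computation is immediate. I expect no serious technical difficulty beyond checking that the factored quantity $1-Q_W(w)^{p-1}\delta^{p}$ is strictly positive so that the split logarithm is well‑defined, which is automatic from $Q_W(w),\delta\in(0,1)$ and $p>1$. Both limits thus collapse to the same OR‑type expression, proving the corollary.
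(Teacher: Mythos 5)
Your proposal is correct and follows essentially the same route as the paper: substitute the uniform prior so that $Q_{\min}=Q_{W}(w)$ and pass to the limit in the right-hand sides of the $D_\alpha$ and $\mathcal{H}^p$ bounds (the $\tfrac{\alpha}{\alpha-1}\to 1$ observation for the R\'enyi case is exactly the paper's argument). The only difference is cosmetic: where the paper evaluates the Hellinger limit via L'H\^opital's rule after rewriting the bracket, you factor $Q_{W}(w)^{1-p}\delta^{-p}$ out of the logarithm and split it, which is an equally valid and slightly more elementary computation of the same limit.
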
  
\begin{proof}
    See Appendix~\ref{Proof_corollary}
    %See the appendix of \cite{guan2025dpipacbayesianframeworkgeneralizationbounds}.
\end{proof}
\begin{corollary} 
\label{PAC-Bayeswithchi_infty}
(\textit{$\mathcal{\chi}^2$}-PAC-Bayes bound). Let the prior $Q_{{W}}$ be uniform over $\mathcal{W}$. For any $\delta\in(0,1]$, with probability at least $1-\delta$ over $P_S$, it holds that
\begin{equation*}
\forall w, \; \mathrm{KL}(\hat{L}(S, w)||L(w)) \leq \frac{ \log \frac{1}{Q_{W}(w)}+2\log \frac{1}{\delta} }{n}.
\end{equation*}
\end{corollary}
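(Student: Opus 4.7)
The plan is to invoke Theorem~\ref{chi-PAC-Bayes bound} directly and specialize its bound to the uniform prior; essentially no additional analysis is needed. Since $Q_W$ is uniform over the finite hypothesis space $\mathcal{W}$, every hypothesis carries the common mass $Q_W(w) = 1/|\mathcal{W}|$, so $Q_{\min} := \min_{w'\in\mathcal{W}} Q_W(w')$ coincides with $Q_W(w)$ for every $w \in \mathcal{W}$. Substituting this equality into the right-hand side of the $\chi^2$-PAC-Bayes bound of Theorem~\ref{chi-PAC-Bayes bound} immediately yields the claimed inequality, and the universal quantification over $w$ is inherited from that theorem, whose proof handles all hypotheses simultaneously via the worst-case maximizer $w^{*} \in \text{argmax}_{w\in\mathcal{W}}\,\mathrm{KL}(\hat{L}(S,w)\|L(w))$.

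Unlike Corollary~\ref{cor:optimal}, the $\chi^{2}$-divergence carries no free parameter analogous to $\alpha$ or $p$, so there is no limiting procedure to invoke: Theorem~\ref{chi-PAC-Bayes bound} is already the tightest form of the $\chi^2$-based bound under the present framework. Consequently, this specialization does not collapse to the Occam's Razor bound; the complexity term remains $\log\frac{1+Q_W(w)}{Q_W(w)}$, strictly larger than the OR complexity $\log\frac{1}{Q_W(w)}$, and the confidence penalty is doubled to $2\log(1/\delta)$, both reflecting the ``square-root'' structure introduced by Lemma~\ref{DPI_original_bound_Chi_Squared_lemma}.

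There is no genuine obstacle to overcome; the corollary functions as a clean restatement that facilitates a side-by-side comparison with the OR bound and with Corollary~\ref{cor:optimal}. The only minor point worth flagging is that the union over $w \in \mathcal{W}$ does not incur an additional $\log|\mathcal{W}|$ factor, because it is already absorbed into the $Q_{\min}$ term produced by the change-of-measure step in the proof of Theorem~\ref{chi-PAC-Bayes bound}.
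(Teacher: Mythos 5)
Your proposal is correct and matches the paper's (implicit) route exactly: the paper states this corollary without a separate proof, since it follows from Theorem~\ref{chi-PAC-Bayes bound} simply by noting that a uniform prior makes $Q_{\min}=Q_W(w)$ for every $w$. Your substitution and the accompanying observations are accurate.
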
 
Importantly, compared to the PAC-Bayes bound \eqref{pac-bayes-pointmass}, both the \(D_\infty\)-PAC‑Bayes bound and the \(\mathcal H^\infty\)-PAC‑Bayes bound remove the extra term $\log2\sqrt{n}$, and these two bounds recover the same expression of the OR bound in Theorem \ref{or_bound_theorem}.

Additionally, the $D_\infty$-PAC-Bayes bound and the $\mathcal{H}^\infty$-PAC-Bayes bound provide a tighter generalization guarantee than the \(\chi^{2}\)-PAC‑Bayes bound by a margin of \(\log(1/\delta)\!/n\).
\section{Acknowledgment}
We acknowledge the helpful discussion with Dr. Matthias Frey.

% Experiments & Discussion (≈1 page, about 40 lines)
% Present the complete proofs, analyze the properties of the derived bound, and compare with existing results.
% Discuss the strengths, limitations, and potential extensions of your work.

%\section{Conclusion}
% Conclusion & Future Work (≈0.5 page, about 20 lines)
% Summarize the contributions, highlight advantages and limitations, and outline future research directions.
%We presented a flexible framework for deriving the generalization bounds with two key advantages. First, the framework is compatible with all divergences in the $f$-divergence family. Furthermore, for several special cases, the bounds derived by our framework show tightness and also have a close connection with other well-established generalization bounds in the literature.

%%%%%%%%%%%%%%%%%%%%%%%%%%%%%%%%%%%%%%%%%%%%%%%%%%%%%%%%%%%%%%%%%%%%%%%%%%%%%%
% References (Up to 4 pages allowed, references only)
%%%%%%%%%%%%%%%%%%%%%%%%%%%%%%%%%%%%%%%%%%%%%%%%%%%%%%%%%%%%%%%%%%%%%%%%%%%%%%
\newpage

\bibliographystyle{plain}
\bibliography{references} % Replace 'yourbibfile' with the name of your .bib file

%
% \appendix
\newpage
\appendix

\subsection{Proof of Lemma~\ref{DPI_original_bound_Chi_Squared_lemma}}
\label{proof_lemma}
For Pearson $\chi^2$-divergence, we have the following exact relationship:
\[
\chi^2(P||Q)=e^{D_2(P||Q)}-1,
\]
where $D_2(P||Q)$ is the Rényi divergence with $\alpha=2$.

Equivalently,
\begin{align}
\label{relationship}
e^{\frac{1}{2}D_2(P||Q)}=(\chi^2(P||Q)+1)^{\frac{1}{2}},
\end{align}

Hence, Lemma~\ref{DPI_original_bound_Chi_Squared_lemma} follows directly by substituting $\alpha=2$ into Lemma~\ref{DPI_original_bound_1} and applying \eqref{relationship}. 

%We define the same kernel as in the proof of Lemma~\ref{DPI_original_bound_1} and \ref{DPI_original_bound_hellinger_p_lemma}. For Chi-Squared divergence we have
%\begin{align*}\mathcal{\chi}^{2}(P_X(x)\|Q_X(x)) &\geq \frac{(P(E)-Q(E))^2}{Q(E)(1-Q(E))}.\end{align*}
%By rearranging the inequality, we achieve
%\begin{align*}{P(E)^2}&\leq Q(E)[(1-Q(E))\mathcal{\chi}^{2}(P_X\|Q_X)+2P(E)-Q(E)] \\&\leq Q(E)\mathcal{\chi}^{2}(P_X\|Q_X)+2Q(E).\end{align*}
%Because both sides are nonnegative, we may take square roots, giving
%\begin{align*}P(E)&\leq(Q(E)\mathcal{\chi}^{2}(P_X\|Q_X)+2Q(E))^{\frac{1}{2}}\\&=Q(E)^{\frac{1}{2}}(\mathcal{\chi}^{2}(P_X\|Q_X)+2)^{\frac{1}{2}}\end{align*}
\subsection{Proof of Theorem~\ref{proof of theorem5}}
\label{Proof of Theorem5_appendix}
We adopt the same joint distributions \( Q_{S, W} \) and \( P_{S, W} \) as Theorem~\ref{PAC-Bayes_with_D_alpha}, and define the events $E$ and $E_w$ as in equations~\eqref{Event_general} and~\eqref{Event_for_specific_w}, respectively. %\textcolor{green}{M.: I find it unclear how $E$ is defined, even after referring to the Theorem and even its proof. I suggest one of the following to avoid ambiguity: 1. repeat the definition here 2. refer by equation number.}
Let the posterior be defined as \( P_{W|S}(w')=\mathbf{1}_{\{w'=w^*\}}\), where $w^*\in\text{argmax}_{w\in \mathcal{W}} \mathrm{KL}(\hat{L}(S,w)||L(w))$. 
%\textcolor{green}{M.: The preceding does not form a full sentence (it is missing a verb).}

We next show that for sufficiently small $\delta$, we have $P(E),Q(E) \leq 1/2$.
%\textcolor{green}{M.: I think this is unnecessarily cryptic and if I understand the lemma correctly, you also need an argument for Q(E). My suggestion would be the following: Clearly, $\lim_{\delta \rightarrow 0} P(E), \lim_{\delta \rightarrow 0} Q(E) = 0$, and hence in particular $P(E),Q(E) < \frac{1}{2}$ for sufficiently small $\delta$. This allows us to apply Lemma~\ref{DPI_original_bound_hellinger_p_lemma} and obtain}
Now we have
$${P}\{E\} = \mathbb{P}_S\left\{ 
\sup_{w} \mathrm{KL}(\hat{L}(S, w)||L(w)) 
\ge \frac{\log \frac{1}{\delta}}{n} 
\right\}.$$

We consider a random variable
$$
K\;:=\;\sup_{w}\mathrm{KL}\bigl(\hat L(S,w)\,\Vert\,L(w)\bigr) 
$$
and its cumulative distribution function (CDF) $F_K(t)\;:=\;\mathbb P_S\{K\le t\},\ t\in\mathbb{R}$.

We also define 
$$
G(\delta)\;:=\;\mathbb P_S\!\Bigl\{\,K\;\ge\;\tfrac{\log(1/\delta)}{n}\Bigr\}
\;=\;
1-F_K\!\Bigl(\tfrac{\log(1/\delta)}{n}\Bigr),
$$
where $G(\delta)={P}\{E\}.$

For any $t\ge0$, $0\le F_K(t)\le1$ and also $F_K(t)$ is non-decreasing, hence $1-F_K(t)$ is non-increasing and also $\lim\limits_{t\to\infty} 1-F_K(t)=0$.
Therefore, $G(\delta)$ is non-increasing and $\lim\limits_{\delta\to0} 1-F_K(\frac{\log(1/\delta)}{n})=0$, or equivalently $\lim\limits_{\delta\to0}G(\delta)=0$.

Thus we can pick $\delta_0\in (0,1]$ such that $G(\delta_0)<\frac{1}{2}$, and every $\delta\le\delta_0$ satisfies 
$$G(\delta)\le G(\delta_0)<\frac{1}{2}.$$

Therefore, choosing $\delta$ sufficiently small guarantees that $G(\delta)=P\{E\}<\frac{1}{2}$.

By the same argument (with \(P\) replaced by \(Q\)), we also have \(\displaystyle\lim_{\delta\to0} Q(E)=0\), and hence in particular $P(E),Q(E) < \frac{1}{2}$ for sufficiently small $\delta$.  This allows us to apply Lemma~\ref{DPI_original_bound_hellinger_p_lemma} in the proof of Theorem~\ref{proof of theorem5}.

We have the inequality \eqref{DPI_original_bound_hellinger_p_inequality} in Lemma~\ref{DPI_original_bound_hellinger_p_lemma}, where two terms---$\left[1+Q(E)^{1-p}\right]^{-\frac{1}{p}}$ and $\mathcal{H}^P(P(S,W)||Q(S,W))$---need to be upper bounded.

When $p\ge1$, finding the upper bound for the term $\left[1+Q(E)^{1-p}\right]^{-\frac{1}{p}}$ is equivalent to finding the upper bound of $Q(E)$. We still use the test-set bound $Q(E)=\sum\limits_wQ_W(w)\mathbb{P}(E_w)\le\delta\sum\limits_wQ_W(w)=\delta$. Also, we have 
\begin{align*}
\mathcal{H}^P(P(S,W)||Q(S,W))
&=\frac{\sum\limits_{w,s}P(s,w)^pQ(s,w)^{1-p}-1}{p-1}\\
&=\frac{\sum\limits_{s}P_S(s)Q_W(w^*(s))^{1-p}-1}{p-1}\\
&\leq\frac{\left[\left(Q^{1-p}_{min}\sum\limits_{s}P_S(s)\right)-1\right]}{p-1}\\
&\leq\frac{\left(Q^{1-p}_{min}-1\right)}{p-1}.  
\end{align*}
Applying the above inequalities to  Lemma~\ref{DPI_original_bound_hellinger_p_lemma}, we can achieve 
\[
P(E)\leq \left(Q_{min}\right)^\frac{1-p}{p}(1+\delta^{1-p})^{-\frac{1}{p}}.
\]
Thus we get
\begin{align*}
{P}\{E\} = &\mathbb{P}_S\left\{ 
\sup_{w} \mathrm{KL}(\hat{L}(S, w)||L(w)) 
\ge \frac{\log \frac{1}{\delta}}{n} 
\right\}\\ 
&\le \left(Q_{min}\right)^\frac{1-p}{p}(1+\delta^{1-p})^{-\frac{1}{p}}.    
\end{align*}

By reparameterizing $\delta'$ as $\left(Q_{min}\right)^\frac{1-p}{p}(1+\delta^{1-p})^{-\frac{1}{p}}$, we can then achieve
\begin{align*}
&\mathbb{P}_S \left\{\exists w,\, \mathrm{KL}(\hat{L}(S, w)||L(w)) \geq 
\frac{\log \left[\frac{1}{\delta^{p}}(Q_{min})^{1-p}-1\right]}{n(p-1)}\right\}\\
&\quad\quad\quad\quad\quad\quad\quad\quad\quad\quad\quad\quad\quad\quad\quad\quad\quad\quad\leq \delta',    
\end{align*}
or equivalently
\begin{align*}
 &\mathbb{P}_S \left\{\forall w,\, \mathrm{KL}( \hat{L}(S, w)||L(w)) \leq 
\frac{\log \left[\frac{1}{\delta^{p}}(Q_{min})^{1-p}-1\right]}{n(p-1)}\right\}\\
&\quad\quad\quad\quad\quad\quad\quad\quad\quad\quad\quad\quad\quad\quad\quad\quad\quad\quad\geq1- \delta'. 
\end{align*}
\subsection{Proof of Theorem~\ref{chi-PAC-Bayes bound}}
\label{Theorem5}
Following the proof procedures in Theorem~\ref{PAC-Bayes_with_D_alpha} and Theorem~\ref{PAC-Bayes_with_H_p}. Lemma~\ref{DPI_original_bound_Chi_Squared_lemma} is applied, we bound $Q(E)^{\frac{1}{2}}$ by KL test-set bound
\[Q(E)^{\frac{1}{2}}\le (\sum\limits_w Q_W(w)\mathbb{P}\{E_w\})^{\frac{1}{2}}=\delta^{\frac{1}{2}}.\]
Also we have\begin{align*}
\left(\chi^2\left(P(s,w)||Q(s,w)\right)+2\right)^\frac{1}{2}&=\left(\sum\limits_{s,w}\frac{P(s,w)^2}{Q(s,w)}-1+2\right)^\frac{1}{2}\\
&\le\left(\frac{\sum\limits_{s}P_S(s)}{Q_{min}}+1\right)^\frac{1}{2}\\
&=\left(\frac{1+Q_{min}}{Q_{min}}\right)^\frac{1}{2}.
\end{align*}
Then we can achieve the bound in Theorem~\ref{PAC-Bayes_with_chi_square} by the reparameterization used in both Theorem~\ref{PAC-Bayes_with_D_alpha} and Theorem~\ref{PAC-Bayes_with_H_p}.
\subsection{Proof of Corollary~\ref{cor:optimal}}
\label{Proof_corollary}
Corollary~\ref{cor:optimal} contains two bounds.  
The RHS of the $\mathcal D_\alpha$–PAC‑Bayes bound contains the multiplicative factor
\(\tfrac{\alpha}{\alpha-1}\).
Since
\[
\lim_{\alpha\to\infty}\frac{\alpha}{\alpha-1}=1,
\]
the entire expression converges to the same limit as \(\alpha\to\infty\).
Hence its limiting form is immediate and a separate proof is omitted.

We focus below on the $\mathcal H^{p}$–PAC‑Bayes bound.

The RHS of Theorem~\ref{PAC-Bayes_with_H_p} is $$\frac{\log\!\bigl[(Q_{\min})^{1-p}\,\delta^{-p}-1\bigr]}{(p-1)\,n}.$$ 
As $p\to \infty$, we have:
\begin{align*}
\lim\limits_{p\to\infty}
\frac{\log\!\bigl[(Q_{\min})^{1-p}\,\delta^{-p}-1\bigr]}{(p-1)\,n}
=\lim_{p\to\infty}
\frac{\log\!\Bigl[\bigl(\tfrac{1}{\delta\,Q_{\min}}\bigr)^{p}-1\Bigr]}{p\,n}.
\end{align*}
We have $\lim\limits_{p\to\infty}(\frac{1}{\delta\,Q_{\min}})^p > 1$ because $0 < \delta \le 1$ and $ 0 < Q_{\min} < 1$, and also we have $\lim\limits_{p\to\infty} p\,n = \infty $.
Therefore, by applying L'Hôpital's rule, we can get the limiting form of the RHS of the $\mathcal{H}^p$-PAC-Bayes bound 
$$\lim_{p\to\infty}
\frac{\log\!\Bigl[\bigl(\tfrac{1}{\delta\,Q_{\min}}\bigr)^{p}-1\Bigr]}{p\,n}
=\frac{\log(1/Q_{\min}) + \log(1/\delta)}{n}.$$

By setting the prior distribution as uniform, we can then replace $Q_{min}$ with $Q_{W}(w)$ and achieve the same form in Corollary~\ref{cor:optimal}.

\end{document}